\documentclass[11pt, a4paper,onecolumn]{belarticle2}
\usepackage{revsymb, amsmath, amsfonts, amssymb, enumerate, fullpage, amsthm, graphicx, braket, relsize, bbm, mathrsfs}
\usepackage{graphicx,color}
\usepackage{paralist}

\newtheorem{theo}{Theorem}
\newtheorem{thm}[theo]{Theorem}

\newcommand{\cO}{\mathcal{O}}

\newcommand{\tr}[2]{\mathrm{tr}_{#2} \left\{ #1 \right\}}
\newcommand{\trace}[1]{\mathrm{tr}\left\{#1 \right\}}
\newcommand{\proj}[1]{\ket{#1}\!\bra{#1}}
\newcommand{\half}{$\frac{1}{2}$ }
\newcommand{\av}[1]{\langle #1 \rangle}

\begin{document}

\title{Quantum Reference Frames \protect\\ and Their Applications to Thermodynamics}

\author{Sandu Popescu}
\affiliation{H. H. Wills Physics Laboratory, University of Bristol, Tyndall Avenue, Bristol, BS8 1TL, United Kingdom}
\email{s.popescu@bristol.ac.uk}

\author{Ana Bel\'{e}n Sainz}
\affiliation{Perimeter Institute for Theoretical Physics, Waterloo, Ontario, Canada, N2L 2Y5}
\email{sainz.ab@gmail.com}

\author{Anthony J. Short}
\affiliation{H. H. Wills Physics Laboratory, University of Bristol, Tyndall Avenue, Bristol, BS8 1TL, United Kingdom}
\email{tony.short@bristol.ac.uk}

\author{Andreas Winter}
\affiliation{ICREA---Instituci\'o Catalana de Recerca i Estudis Avan\c{c}ats, Pg.~Lluis Companys 23, 08010 Barcelona, Spain}
\affiliation{F\'{\i}sica Te\`{o}rica: Informaci\'{o} i Fen\`{o}mens Qu\`{a}ntics, Departament de F\'{\i}sica,\\[-1mm] Universitat Aut\`{o}noma de Barcelona, 08193 Bellaterra (Barcelona), Spain}
\email{andreas.winter@uab.cat \\[1.33mm] AW spoke on quantum thermodynamics at the Royal Society discussion meeting \emph{Foundations of quantum mechanics and their impact on contemporary society}, 11-12 December 2017, and reported, among other things, on the content of \cite{NCCQ,jonathan}; the present paper addresses some of the issues highlighted there.}

\date{\today}

\maketitle

\begin{abstract}
We construct a quantum reference frame, which can be used to approximately
implement  arbitrary unitary transformations on a system in the 
presence of any number of extensive conserved quantities, by absorbing 
any back action provided by the conservation laws. Thus, the reference
frame at the same time acts as a battery for the conserved quantities.

{Our construction features a  physically intuitive, clear and implementation-friendly
realisation. Indeed, the reference system is composed of the same types of subsystems as the original system
and  is finite for any desired accuracy.
In addition, the interaction with the reference frame can be broken down into
two-body terms coupling the system to one of the reference frame subsystems at a time.}
We apply this construction to quantum thermodynamic setups with multiple,
possibly non-commuting conserved quantities, which allows for the 
definition of explicit batteries in such cases.
\end{abstract}

\section{Introduction}
\label{sec:intro}
When considering what transformations of a quantum system are theoretically possible, we usually imagine that we can implement any unitary transformation on the system. However, this may not be consistent with conservation laws:  Wigner \cite{Wig52} and Araki and Yanase \cite{Ara60} showed, in what is known as the WAY theorem, that an observer cannot measure any observable that does not commute with an additive conserved quantity, as such a measurement would necessarily change this conserved quantity. From an information-theoretic perspective, WAY is a consequence of the no-programming theorem for projective measurements \cite{Marvian}. The situation was revisited, and clarified, by Aharonov and Susskind~\cite{AharonovSusskind1,AharonovSusskind2}, followed by Aharonov and Kaufherr~\cite{AharonovKaufherr} who pointed out its deep connection with the issue of quantum frames of reference: The conservation laws always refer to a closed system, such as a freely floating spaceship, and there is a need to consider carefully the difference between observables defined in relation with frames of reference defined internally (i.e.~inside the rocket) or externally. This enabled them to also clarify the issue of super-selection rules~\cite{wig}, another apparent constraint on quantum mechanics. Recently, there has been a renewed very strong interest in understanding quantum frames of reference as well as their connection with conservation laws; see e.g. \cite{RobRTA,dialogue}.

All the above-mentioned works address the fundamental question of which operations are physically possible on a system under conservation laws. 
{The present paper focuses on one of the central issues in this area - how to implement a unitary transformation  on the system with respect to an external reference frame. Previous approaches to this problem have either considered particular cases, or relied upon general group theoretic constructions for an internal reference frame. This latter approach requires the existence of a very abstract and complex ancillary system to act as the reference frame, which interacts in a complex way with the system. Our approach, in contrast, gives a general construction for an internal reference frame which is also easy to implement and understand physically.}
 In particular, we use  
a single fixed reference frame to perform an arbitrary unitary transformation 
with high precision on an unknown state of a system;
the same reference frame will work for any extensive conserved quantities, 
and does not depend otherwise on the symmetries to which they are 
linked. Indeed, we could impose an arbitrary set of extensive conservation laws 
which are not even linked to symmetries of nature, and the construction would still apply; this is for example the case 
in the applications of the theory to thermodynamics, to be discussed later. 
Furthermore, the reference frame (i) is composed of the same basic subsystems 
as the system on which it acts (and thus does not require the existence of 
a new type of system with special properties), (ii) is finite for any desired 
precision, and (iii) the required transformations can be implemented by a series of  
interactions between two subsystems at a time, each step being a very short
time evolution of a simple Hamiltonian between the system and one of its
replicas in the reference frame (Sec.~\ref{se:frame}). 
We also show that if the system itself
is composed of smaller parts, the interaction with the reference frame 
can be implemented by a quantum circuit in which each gate respects the 
conservation laws (Subsec.~\ref{sec:composite}).

Our motivation for the present work stems from recent research on quantum 
thermodynamics, and in particular by considering the thermodynamics of systems 
with multiple conserved quantities \cite{NCCQ,jonathan}. Given access to a 
large generalised thermal state (composed of many identical subsystems 
with a small amount of assumed structure), and the ability to perform any 
unitary, it is possible to extract an arbitrary amount of any given 
conserved quantity by inputting an appropriate amount of the other 
conserved quantities \cite{NCCQ}. A generalised second law restricts 
the rate at which different conserved quantities can be interconverted. 
In this picture, any change in the conserved quantities of the system  
due to the action of the unitary is interpreted as work (implicitly 
stored in some battery).  If the conserved quantities all commute, 
then the formalism can be extended to include explicit batteries, 
with the system+battery combination obeying strict conservation laws. 
In \cite{NCCQ} it was left as an open problem how to achieve a similar 
result for non-commuting conserved quantities. A similar problem was 
addressed in \cite{jonathan}, where a reference frame was used as 
an explicit battery to implement such transformations. However, that 
work was based on interactions with a particular infinite dimensional 
choice of ``perfect'' reference system, which may not be available. 
{Here we use our alternative reference frame construction to present 
an explicit battery which is simply comprised of many copies of the 
subsystems on which it acts, in different initial states (Sec.~\ref{se:thermo}), 
and which can operate with arbitrary effectiveness while being of finite size. }

\section{Background}
\label{sec:back}

Consider a closed quantum system, such as the spaceship mentioned before, subjected to conservation laws. Consider a particle in that closed system, say a spin-\half particle. Suppose we want to implement a given unitary evolution, \emph{defined in terms of an external frame of reference}, say a rotation by an angle $\theta$ around the $z$-direction. Could this be achieved by apparatuses situated \emph{inside} the rocket?  The reason why this question received so much interest is that, apart from the problem itself and what it tells us about conservation laws and frames of reference, when cast in different set-ups, it has major implications for many other domains. For example, Kitaev \emph{et al.}~\cite{Kitaev} considered it in the context of quantum cryptography protocols; more recently, it has emerged as a crucial issue in quantum thermodynamics, cf.~\cite{LKJR}.

The point is that conservation laws impose constraints to what can be done: The desired unitary, { $U=\exp(-i\sigma_z\theta)$ } in the above example, does not commute with the total angular momentum along any other axis apart from $z$, say with $J_x^{\text{total}}$, hence it is impossible. Therefore, apparently there is no way to achieve our desired goal. It turns out, however, that there is a way: Following \cite{AharonovKaufherr}, the reason why a rotation along the external $z$ axis is impossible in general is that actions {confined} inside the closed system do not have access to the external frame of reference, so they have no knowledge of what the $z$ axis actually is. All that internal apparatuses can do is to rotate the spin around some internally defined axis, say the direction from the floor to the ceiling of the rocket: Such a rotation, relative to the internal degrees of freedom -- the other parts of the rocket -- commutes with all components of total angular momentum so does not contradict the angular momentum conservation law. If however, in advance of the request for the rotation, the external observers align the internal direction -- the ``internal frame of reference'' -- with the external one, say the `floor to the ceiling' direction to the external $z$-direction, and if rotating the spin does not produce too much of a back-reaction to the rocket to misalign it, the rotation along the internal direction coincides with a rotation around the $z$-axis.

The central question is thus how to prepare and align the internal frame, which, for more general unitaries, is not so obvious as in the case of spin rotations, and how to ensure it is robust enough, so that it stays aligned under the kick-back. 
{Building up from the WAY theorem, new works have further explored complex mathematical constructions to devise internal frames. For instance, a general version of this problem was also studied as the resource theory of asymmetry \cite{RobRTA, Ahmadi, Marvian13}. } Other approaches have further focused on the role of the particular symmetry group $G$ which, according to Noether's theorem, comes with the conservation laws. As a consequence, the internal reference frame that one can define following this approach is heavily group dependent. This fact is clearly illustrated by the construction of the ``model'' reference frame $L^2(G)$: the square integrable functions on the group $G$ (under the invariant Haar measure) equipped with the left regular representation \cite{RobRTA}. Hence, while this approach is mathematically very elegant, both in its definition and its interaction with the system, {the construction is very complex, and so is the interaction required to implement the desired unitary on the system. Hence one might wonder what sort of  complex physical system this frame actually represents, and how to physically implement the necessary interaction. }

{In the next {section} we show how one can construct instead a reference frame that is both simple to implement and physically intuitive, { and which will work for any extensive conserved quantities}}. 

\section{A general quantum reference frame}
\label{se:frame}
In this section, we construct a general quantum reference frame which will allow us to perform an arbitrary unitary transformation $U_S$ with high precision on an unknown system state $\rho_S$ despite arbitrary extensive conserved quantities. 

The reference frame, $R$, is composed of a large number of copies of the system, and is  prepared in a particular state $\rho_R$  which is independent of $U_S$ or $\rho_S$. The size of the reference frame will depend on the precision to which we will attempt to simulate the unitary $U_S$, and {will be made more precise in appendix \ref{ap:On}}. 

We define an extensive conserved quantity in this context as one which can be written as  
$ A^{\text{TOT}}=A_{S} + \sum_{r \in \text{R}} A_r$. In other words, as a sum of the same  quantity $A$ for the system and for each subsystem $r$ in the reference frame. 

We show below that any $U_S$ can be implemented  on the system with high precision by an appropriately chosen joint unitary $V$ on the combined system and reference frame which commutes with all conserved quantities. 

\begin{thm} 
Let $S$ be any finite dimensional quantum system. Then for every $\epsilon > 0$, there exists a reference frame (composed of a large number of copies {of the system $S$, prepared in suitable states}) with a fixed state $\rho_R$,  such that for every unitary $U_S$ on the system  there exists a joint unitary $V$  on the system and reference frame such that
\begin{itemize}
\item $V$ conserves all extensive conserved quantities: $[V, A^{\text{TOT}}] = 0$.
\item $V$ effectively implements $U_S$ on the system with $\epsilon$ precision, for any initial system state: 
\begin{equation}
 \left\| \tr{V \, \rho_S \otimes \rho_R\, V^\dagger }{R} - U_S \rho_S U_S^\dagger \right\|_1 \leq \epsilon \quad \forall \rho_S,
\end{equation}
where $\rho_S$ and $\rho_R$ are density operators (i.e. $\rho \geq 0$, $\mathrm{tr}\,\rho=1$),
and $\| \cdot \|_1$ is the trace norm (which characterises how well two states can be distinguished). 
\end{itemize}
\end{thm}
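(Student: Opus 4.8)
The plan is to implement $U_S$ \emph{relative} to the reference frame by a unitary $V$ covariant under the symmetry generated by the conserved quantities, and then to bound the error incurred when the (conceptually infinite) ideal frame is replaced by a finite piece of $S^{\otimes N}$. Let $G$ be the group generated by $\{e^{itA_S^{(j)}}\}_j$, acting on $S$ through a unitary representation $T$; on $R=S^{\otimes N}$ it acts as $T^{\otimes N}$. Since every extensive conserved quantity has the form $A^{\text{TOT}}=A_S+\sum_r A_r$, the requirement $[V,A^{\text{TOT}}]=0$ for all of them is equivalent to $[V,\,T(g)\otimes T^{\otimes N}(g)]=0$ for all $g\in G$. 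I would carry out the argument first for a single $U(1)$ charge $A$, where there is a genuine ``charge'' grading, and then lift it to the general, possibly non-abelian, case.

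\noindent\textbf{The covariant unitary.}
For one charge, write $U_S=\sum_{a,b}U_{ab}$ with $U_{ab}=\Pi_aU_S\Pi_b$, so the block $U_{ab}$ changes $A_S$ by $a-b$. Inside $\mathcal{H}_R=\mathcal{H}_S^{\otimes N}$, fix a window of $M$ consecutive, strictly interior total-charge sectors (each of large dimension) together with a unitary ``ladder'' $W$ that raises $A_R$ by one unit within a common-dimensional subspace of this window, and set
\begin{equation}
V=\sum_{a,b}U_{ab}\otimes W^{\,b-a}=\Bigl(\sum_a\Pi_aU_S\otimes W^{-a}\Bigr)\Bigl(\sum_b\Pi_b\otimes W^{b}\Bigr),
\end{equation}
extended to a genuine unitary on $\mathcal{H}_S\otimes\mathcal{H}_R$ that is block-diagonal in $A^{\text{TOT}}$ and arbitrary outside the subspace on which the above recipe is defined. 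The factorization on the right exhibits $V$ as a product of two manifestly unitary operators (using $\sum_a\Pi_a=\id$ and unitarity of $W$), so $V$ is unitary; and since $U_{ab}$ shifts $A_S$ by $a-b$ while $W^{b-a}$ shifts $A_R$ by $b-a$, $V$ commutes with $A^{\text{TOT}}$ \emph{exactly}. This is the step that reconciles exact unitarity, exact conservation, and finiteness of $R$: $V$ is honest and exactly conservation-respecting by construction, with the unavoidable imperfection confined to the boundary sectors, which a suitable $\rho_R$ will not touch.

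\noindent\textbf{The truncation error.}
Take $\rho_R=\proj{R}$ with $\ket{R}=\tfrac{1}{\sqrt{M}}\sum_{c}\ket{c}_R$, a uniform superposition of one fixed unit vector $\ket{c}_R$ from each sector of the window; this $\rho_R$ is fixed once and for all, independent of $U_S$ and $\rho_S$. A direct computation then gives
\begin{equation}
\tr{V\,\rho_S\otimes\rho_R\,V^\dagger}{R}=\sum_{a,b,a',b'}\left(1-\frac{|(b-a)-(b'-a')|}{M}\right)_{+}U_{ab}\,\rho_S\,U_{a'b'}^\dagger=\frac{1}{2\pi}\int_{-\pi}^{\pi} F_M(\phi)\,U_S^{(\phi)}\rho_S\,U_S^{(\phi)\dagger}\,d\phi,
\end{equation}
where $U_S^{(\phi)}=e^{-i\phi A_S}U_S e^{i\phi A_S}$ and $F_M\geq 0$ is the Fej\'er kernel ($\tfrac{1}{2\pi}\int F_M=1$): the induced channel is an approximate-identity average of $G$-conjugates of $U_S\rho_SU_S^\dagger$. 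By the triangle inequality together with $\|U_S^{(\phi)}-U_S\|_\infty\leq 2|\phi|\,\|A_S\|_\infty$,
\begin{equation}
\left\|\tr{V\,\rho_S\otimes\rho_R\,V^\dagger}{R}-U_S\rho_SU_S^\dagger\right\|_1\leq\frac{4\|A_S\|_\infty}{2\pi}\int_{-\pi}^{\pi} F_M(\phi)\,|\phi|\,d\phi=O\!\left(\frac{\|A_S\|_\infty\log M}{M}\right),
\end{equation}
which tends to $0$ and, crucially, is uniform over all unitaries $U_S$ (only the fixed norm $\|A_S\|_\infty$ enters) and over all states $\rho_S$. Choosing $M$ --- and hence $N$, and hence the size of $R$ --- large enough makes this $\leq\epsilon$, which settles the $U(1)$ case.

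\noindent\textbf{General case and main obstacle.}
For non-commuting conserved quantities the charge grading is replaced by the isotypic decomposition of $\mathcal{H}_S$ under $G$: the blocks of $U_S$ between isotypic components must be compensated by intertwiner (Clebsch--Gordan type) operations on the reference, so $R$ must host a sufficiently fine finite truncation of the regular representation of $G$ --- the finitely many irreducibles relevant to $S$ and their conjugates, with large multiplicity --- which $S^{\otimes N}$ provides for $N$ suitably large. The same two moves go through (a covariant unitary, then an approximate-identity average against a kernel on $G$ concentrating at the identity), and the error is again bounded by a universal constant times $\max_j\|A_S^{(j)}\|_\infty$ times the diameter of the kernel's support. \textbf{The main obstacle} is precisely this uniformity: a single fixed $\rho_R$, of size depending only on $\dim S$, the conserved quantities, and $\epsilon$, must work for \emph{every} $U_S$ and \emph{every} $\rho_S$. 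This is possible because the back-action any $U_S$ can impose is a priori bounded --- it can displace each $A_S^{(j)}$ by at most its spectral diameter --- so the frame need only be ``wide enough'' relative to this universal scale; a secondary technical point is the bookkeeping keeping $V$ exactly unitary and exactly conservation-respecting at finite size, handled by the block-diagonal construction with harmless boundary sectors. Finally, decomposing $W$ and the blocks of $U_S$ into few-body pieces lets $V$ itself be realized as a sequence of two-body interactions, each coupling $S$ to a single reference copy.
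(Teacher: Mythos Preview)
There is a genuine gap. The theorem asks that $[V,A^{\text{TOT}}]=0$ for \emph{every} extensive quantity $A^{\text{TOT}}=A_S+\sum_r A_r$, i.e.\ for every single-particle Hermitian $A$ --- equivalently, that $V$ commute with $U^{\otimes(N+1)}$ for all $U\in U(d)$, which by Schur--Weyl duality forces $V$ into the algebra spanned by permutations of the $N{+}1$ copies. You instead fix a particular set of charges $\{A^{(j)}\}$, form the group $G$ they generate, and build a $G$-covariant $V$. For $G\subsetneq U(d)$ your ladder $W$ (which shifts one specific charge and acts nontrivially within its sectors) will in general fail to commute with $B_S+\sum_r B_r$ for $B$ outside the Lie algebra of $G$, so your $V$ does not meet the first bullet. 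If you try to repair this by taking $G=U(d)$, the charge-grading and ladder picture dissolve: there is no single abelian ``charge'' to climb, and the only covariant operators are those built from swaps --- which is precisely what the paper uses. Your ``general case'' paragraph inherits the same problem: the isotypic/Clebsch--Gordan machinery is tied to a \emph{given} $G$, not to the full $U(d)$ covariance the statement demands.

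The paper's construction sidesteps all of this by starting from $\text{SWAP}$, which commutes with $A\otimes\id+\id\otimes A$ for \emph{every} $A$ automatically. One computes
\[
\tr{e^{-i(\alpha/N)\,\text{SWAP}}\,(\rho_S\otimes\sigma)\,e^{\,i(\alpha/N)\,\text{SWAP}}}{R}
=\rho_S-i\tfrac{\alpha}{N}[\sigma,\rho_S]+\cO(N^{-2}),
\]
so a partial swap against a reference copy in state $\sigma$ effects the infinitesimal rotation $e^{-i(\alpha/N)\sigma}$ on $S$. Fixing density matrices $\sigma_1,\dots,\sigma_D$ (with $D=d^2{-}1$) that together with $\id$ span the Hermitian operators, the reference frame is the fixed \emph{product} state $\rho_R=(\sigma_1\otimes\cdots\otimes\sigma_D)^{\otimes N}$, independent of $U_S$ and $\rho_S$. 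Writing $U_S=e^{-iH}$ with $H=\sum_k\alpha_k\sigma_k$ (the $\alpha_k$ uniformly bounded once the basis is fixed), a sequence of $D$ partial swaps implements $e^{-iH/N}$ up to $\cO(N^{-2})$, and $N$ repetitions Trotterize to $U_S$ with total error $\cO(N^{-1})$, uniformly in $\rho_S$ and in $U_S$. No group is ever identified, the frame is an explicit product of system copies, and conservation of \emph{all} extensive quantities is built in --- exactly the three features your top-down approach has difficulty delivering simultaneously.
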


\begin{proof}
The proof consists of two parts. First, that infinitesimal rotations can be implemented, and then that from them a general unitary can be implemented without going above the precision threshold.

For the first part of the proof, we fix a large $N \gg 1$ and define the unitary $V_{\alpha}$ acting on the system and one copy of the system from the reference frame (denoted by $R$) as follows:
\begin{align} \label{eq:Vtheta} 
V_{\alpha} =\exp \left( -i \frac{\alpha}{N} \text{SWAP} \right) \,,
\end{align}
where $\text{SWAP} = \sum_{i,j} \ket{ij}\!\bra{j i}$. Since the conserved quantities are additive, and all subsystems are identical, any function of the SWAP operation will conserve them. Therefore, $[V_{\alpha} ,  A^{\text{TOT}}] = 0$ for all $\alpha$. 

Let $\sigma$ be an arbitrary but fixed initial state for the reference frame particle. The effective action on the system when applying $V_{\alpha}$ on $\rho_S \otimes \sigma$ is:
\begin{align} \label{eq:effectiveU} 
\rho_S^f &= \tr{ V_{\alpha} \, \rho_S \otimes \sigma \, V_{\alpha}^\dagger }{R} \nonumber \\
&= \tr{ \rho_S \otimes \sigma -i \tfrac{\alpha}{N} \, \text{SWAP} \, \rho_S \otimes \sigma + i \tfrac{\alpha}{N} \, \rho_S \otimes \sigma \, \text{SWAP} }{R} + \cO\left( \frac{1}{N^2} \right) \nonumber \\
&= \rho_S -i \tfrac{\alpha}{N} \, [\sigma, \rho_S] + \cO\left( \frac{1}{N^2} \right)\,,
\end{align}
where a Taylor expansion for small values of $\tfrac{\alpha}{N}$ was used.\footnote{We will say that a trace class operator $A$ is  $\cO\left( \frac{1}{N^k} \right)$ if $\|A\|_1 \leq c/N^k$ for all sufficiently large $N$, where $c$ is a constant (which here depends on $\alpha$). We show this in detail in Appendix \ref{ap:On}. In other contexts, we will use the same notation to refer to a number $a$ satisfying $|a| \leq c/N^k$ for all sufficiently large $N$.}
The last equality follows from $\tr{\text{SWAP} \, \rho_S \otimes \sigma}{B} = \sigma \, \rho_S$ and $\tr{\rho_S \otimes \sigma \, \text{SWAP}}{B} = \rho_S \, \sigma$ (see Appendix  \ref{ap:swap}).

Defining $U_{\sigma}(\alpha) = \exp\{ -i\frac{\alpha}{N} \sigma\}$ we thus find that  
\begin{equation*}
\rho_S^f =  U_{\sigma}(\alpha) \, \rho_S  \, U_{\sigma}^{\dagger}(\alpha) + \cO\left( \frac{1}{N^2} \right)\,, 
\end{equation*}
i.e.
\begin{align}\label{eq:epsfor1}
  \left\|\tr{V_{\alpha} \, \rho_S \otimes \sigma\, V_{\alpha}^\dagger }{B} - U_{\sigma}(\alpha) \, \rho_S  \, U_{\sigma}^{\dagger}(\alpha) \right\|_1 <  \cO\left( \frac{1}{N^2}\right).
\end{align}


The above procedure allows us to generate small rotations of a qubit about a particular state $\sigma$. We now show how to generalise this to implement arbitrary small rotations. 

Let $\mathcal{H}_S$ be the Hilbert space of the system,  and  $\{\sigma_k\}_{k=1}^D$ be a set of density operators describing  states of the system such that  $\mathcal{B} = \{\sigma_k\} \cup \mathbbm{1}$ is an arbitrary but fixed operator basis of $\mathcal{H}_S$\footnote{Note that this implies that $D=d^2-1$ where $d$ is the Hilbert space dimension.}.
Now consider {the following state of $R$},  composed of one copy of each of the states $\sigma_k$:
\begin{equation} 
  \sigma_R= \sigma_1  \otimes \sigma_2 \otimes \ldots \sigma_D.
\end{equation}

Suppose that we ultimately wish to implement an arbitary unitary $U_S = \exp\left( - i H \right)$ on the system. As global phase factors in $U_S$ do not affect the evolution of the state, we can assume without loss of generality that $H= \sum_{k=1}^D \alpha_k \sigma_k$. Additionally, as all values of $\exp(i \theta)$ can be obtained for $ \theta \in (-\pi, \pi]$, we can choose $H$ such that its eigenvalues lie within this range. As the system is finite dimensional, for any fixed operator basis $\mathcal{B}$, this also implies that the parameters  $\alpha_k$ are also bounded by a constant (see appendix \ref{ap:alpha_bound}). 

Let us now consider how to implement the small general rotation $U_H = \exp\left( - i \frac{H}{N} \right)$. Denote by $V_{\alpha_k}^{(k)}$ the unitary $V_{\alpha_k}$ applied on the system and the particle from the reference frame that is in state $\sigma_k$. Now consider the action of the sequence of unitaries $V_{\text{seq}} =V_{\alpha_D}^{(D)} \ldots V_{\alpha_2}^{(2)} V_{\alpha_1}^{(1)} $ on the system $\rho_S$ and reference frame $\sigma_R$. By keeping the first order terms in $\tfrac{1}{N}$, similarly to the calculation in the infinitesimal-rotation case, we obtain:
\begin{align}
\rho_S^f &= \tr{V_{\text{seq}}  \rho_S \otimes \sigma_R \, V_{\text{seq}}^\dagger}{R_1 \ldots R_D} \nonumber \\
&= \tr{ \rho_S \otimes \sigma_R -i \sum_{k=1}^D \tfrac{\alpha_k}{N} \, \text{SWAP}_{S,R_k} \, (\rho_S \otimes \sigma_R) + i \sum_{k=1}^D \tfrac{\alpha_k}{N} \, (\rho_S \otimes \sigma_R) \, \text{SWAP}_{S,R_k} }{R_1 \ldots R_D} + \cO\left( \frac{1}{N^2} \right) \nonumber \\
&= \rho_S -i \tfrac{1}{N} \, \left[\sum_{k=1}^D \alpha_k \sigma_k, \rho_S\right] + \cO\left( \frac{1}{N^2} \right) \nonumber\\
&= \rho_S -i \tfrac{1}{N} \, [H, \rho_S] + \cO\left( \frac{1}{N^2} \right) \nonumber \\
& = U_H \rho_S U^{\dagger}_H  + \cO\left( \frac{1}{N^2} \right)\,. \label{eq:nueqvoe}
\end{align}

We are now in position to prove the general claim of the Theorem, where 
$U_S$ is an arbitrary unitary on $S$ (which need not be close to the identity). 

Consider a reference frame  composed of $N $ copies of $\sigma_R$: 
\begin{equation} 
  \rho_R=\sigma_R^{\otimes N} =  (\sigma_1  \otimes \sigma_2 \otimes \ldots \sigma_k)^{\otimes N}.
\end{equation}
The required value of $N$ will depend on the precision to which we will 
attempt to simulate the unitary $U_S$.

To implement $U_S$, we apply the unitary $V_{\text{seq}}$ on the system plus $D$ reference particles $N$ times, each time using a different copy of $\sigma_R$  in the reference frame. Each unitary in the sequence will  act on the system as the unitary $U_H$ up to an error $\cO\left( \frac{1}{N^2} \right)$. The upper bound on the total error after applying the sequence of $V_{\text{seq}}$ will be $N \cO\left( \frac{1}{N^2} \right) \sim \cO\left( \frac{1}{N} \right)$ (see appendix \ref{concatenation}).  Hence 
\begin{equation} 
  \left\| \tr{V \, \rho_S \otimes \rho_R\, V^\dagger }{R} - U_S \rho_S U_S^\dagger \right\|_1 \leq  \cO \left( \frac{1}{N} \right) 
\end{equation} 

Given any $\epsilon>0$ and a unitary $U_S$, it is always possible to find a sufficiently large $N$ such that $\cO \left( \frac{1}{N} \right)  \leq \epsilon$. Hence a sufficiently large reference frame  will allow us to take the system to a final state $U_S \rho_S U_S^\dagger$ within an error upper bounded  by $\epsilon$ as required. This is achieved by applying a sequence of unitary transformations $V_{\alpha}$  on the system plus reference frame particles, each of which commutes with (and hence conserves) all extensive quantities $A^{\text{TOT}}$. 
\end{proof}

\subsection{Composite systems} 
\label{sec:composite} 
Although the previous result will work for any system, including composite systems, the construction will in general require  joint transformations involving all of the particles in the system. Particularly in the context of thermodynamics, it is common to consider systems composed of many particles, where this may pose practical difficulties. Here we show that our approach can be adapted easily to allow for a `circuit'-model of transformations, where any unitary on the system can be implemented by a series of transformations involving a small number of particles from the system and reference frame. 

{ If we have a composite system made up of some finite number of different types of subsystems, then our reference frame is simply  the tensor product of the reference frames for each type of subsystem.  Using our previous results, we can approximately implement any single-subsystem unitary. To implement any two-subsystem unitary between subsystems $A$ and $B$, we {(i)} consider those two subsystems as a composite system, and {(ii)} use the same procedure as before on this composite system and a similar composite system in the reference frame { (hence involving only 4 primitive subsystems).} For example, to implement $\exp \left( - i \alpha \frac{\sigma_1^{(A)} \otimes \sigma_1^{(B)}}{N} \right)$ we would apply 
$\exp \left( - i \frac{\alpha}{N} \, \text{SWAP}_{S_A, R_A} \, \text{SWAP}_{S_B, R_B} \right)$ to the  state $\rho_{S_A  S_B} \otimes \left( \sigma_{1}^{(A)} \otimes \sigma_{1}^{(B)}\right)_{R_A  R_B}$. Once we can perform any two-subsystem unitary, we can use standard techniques to  construct a circuit to  implement any unitary on the system up to the desired accuracy\footnote{In particular, we can  apply the approach in \cite{nielsenandchuang}  (generalised to arbitrary dimension) based on two-level unitaries built from controlled unitary gates on two subsystems. We can then bound the total error by appropriately bounding the error on each gate.}.

If the subsystems are all of the same type, then an alternative approach would be to use {(i)} the fact that $U_{np} = \sqrt{\text{SWAP}}$ is an entangling two-subsystem gate which commutes with all conserved quantities, and {(ii)} that such a gate plus all single subsystem unitaries is computationally universal {\cite{DupontDupond}.} This would require only bipartite interactions, {with the caveat of} a more complicated construction. }

\section{An application to quantum thermodynamics}
\label{se:thermo}
Although the reference frame construction  above could be of use in 
many different scenarios, one interesting application is to  quantum 
thermodynamics with multiple  conserved quantities \cite{NCCQ,jonathan}. 
In this section, we briefly review the approach given in \cite{NCCQ}, 
and consider how our reference frame can be thought of as a battery in such cases. 

The general setup consists of a thermal bath, an additional system out of equilibrium with respect to the bath, and a battery that stores any work extracted from the system and bath. The thermal bath can be thought of as a collection of individual subsystems, each of which is in a generalised thermal state 
$\tau = \frac{1}{Z} e^{-(\beta_1 A_1 + \ldots + \beta_k A_k)}$, where  
$A_i$ are an arbitrary set of conserved quantities, $\beta_i$ are generalised 
inverse temperatures associated with these conserved quantities\footnote{E.g. for energy $\beta_i = \frac{1}{k_B T}$ where $k_B$ is Boltzmann's constant and $T$ is the  temperature, whereas for particle number conservation, $\beta_i = \frac{\mu_i}{k_B T}$ where $\mu_i$ is the chemical potential. Analogous inverse temperatures can be associated with other conserved quantities such as angular momentum.}
and $Z = \trace e^{-(\beta_1 A_1 + \ldots + \beta_k A_k)}$ is the generalised partition function. 
We will be particularly interested in cases in which the different conserved quantities do not commute. 

The simplest case to consider is a thermal bath with no additional system,  in which the battery is treated implicitly. In this scenario, one simply allows any unitary transformation on the thermal state of the bath, and assumes that any change in the conserved quantities of the  bath are accounted for by an equal and opposite amount of work of the corresponding type
\begin{align} 
W_{A_i}= -\Delta  \av{A_i^{\textrm{bath}}}, 
\end{align} 
in accordance with the first law of thermodynamics. A generalised second law of thermodynamics can then be derived stating that 
\begin{align}\label{eq:nosys}
\sum_i \beta_i W_{A_i} \leq 0\,,
\end{align}
where $W_{A_i}$ is the the type $A_i$ work extracted from the bath, and $\beta_i$ the inverse temperature conjugate to $A_i$.

In addition, if one now includes an additional non-thermal system in the picture, the amount of extracted work is bounded by
\begin{align}\label{eq:wsys}
\sum_i \beta_i W_{A_i} \leq -\Delta F_S\,,
\end{align}
where  $W_{A_i}= - \Delta \av{ A_i^{\textrm{sys}}} -\Delta  \av{A_i^{\textrm{bath}}}$ is the amount of $A_i$ type work extracted from the system and bath, and $F_S = \sum_i \beta_i \langle A_i^{\textrm{sys}} \rangle - S_S$ is the free entropy of the  system relative to the generalised bath, with $S_S = -\tr{\rho_S \ln \rho_S }{S}$ the system's entropy. 
 
Furthermore, given a small set of structural assumptions about the thermal bath, 
protocols can be constructed which can approximately implement any transformation 
satisfying the bounds given by eq.~\eqref{eq:nosys} and eq.~\eqref{eq:wsys}. 
Therefore, we learn that in principle we can extract as much $A_i$ type work $W_{A_i}$ 
as we want, as long as the other conserved quantities of the  bath and  system 
change as well  to compensate.

In the case in which the conserved quantities commute, and can 
in principle be stored separately from one another, one can extend these 
setups to include explicit quantum batteries \cite{NCCQ}. 
In particular, given a fixed initial state of a quantum  battery,  
we can map every  transformation in the implicit battery scenario onto 
a global unitary which commutes with all conserved quantities  
$A_i^{\textrm{sys}} + A_i^{\textrm{bath}} +A_i^{\textrm{battery}}$ 
and which achieves the same results to any desired accuracy. 

However, how to deal with explicit batteries in the presence of 
noncommmuting conserved quantities (under strict conservation) was left 
as an open question in \cite{NCCQ}, and treated using the $L^2(G)$
construction in \cite{jonathan}. Here we resolve this question by 
showing how the reference frames introduced earlier can be used as explicit 
batteries. Furthermore, because the same approach works for any 
conserved quantities, this also shows how explicit batteries can be 
constructed when the conserved quantities cannot be separated from each 
other, or when a battery system of the type in \cite{NCCQ} or \cite{jonathan} is not available. 

In particular, consider a bath and system  composed of a number of subsystems, and a reference frame composed of the same types of subsystems (as described in section  \ref{sec:composite}) which will act as a battery. Any desired protocol on the system and bath in the implicit battery scenario can be implemented approximately  to any desired accuracy given access to a sufficiently large reference frame, whilst respecting strict conservation of all extensive conserved quantities. Because of these conservation laws
\begin{align} 
\Delta \av{ A_i^{\textrm{sys}}}  + \Delta  \av{A_i^{\textrm{bath}}} + \Delta \av{ A_i^{\textrm{battery}}} =0.
\end{align} 
If the desired transformation on the system and bath would extract work $W_{A_i}$ in the implicit battery scenario, and this transformation is implemented with $\epsilon$ precision (in trace norm) using the reference frame, then 
\begin{align} 
| \Delta \av{ A_i^{\textrm{battery}}}  - W_{A_i} | \leq \epsilon \bigl\| A_i^{sys} + A_i^{bath} \bigr\|
\end{align} 
where $\| \cdot \|$ represents the operator norm. Any deviation between the work extracted in the implicit battery scenario, and the conserved quantities stored in the explicit battery can therefore be made as small as desired by taking a sufficiently large reference frame.

Following this procedure, any changes in the average values of  
conserved quantities in the battery correspond to stored or expended work.

\section{Conclusions}

{We have shown that it is possible to construct a simple reference frame, which allows us to apply any desired unitary on the system, and which moreover provides a physical intuition on the nature of the operations that should be applied to achieve it. This reference frame can in addition act as { a battery} of extensive conserved quantities.
{Previous general constructions of quantum reference frames \cite{RobRTA} have been based on the symmetry group associated with the conserved quantities via Noether's theorem. This is mathematically elegant, but requires systems and transformations which may be very difficult to find or implement physically. Our alternative } approach works universally for any set of conserved quantities, and does not  require Noether's theorem in the sense that we do not need to first construct the group symmetry belonging to the conservation law. {Furthermore, it allows for a relatively simple physical implementation involving a `circuit' model on multiple copies of the  system in a fixed product state.}}

The size of the frame in this ``bottom-up''
approach (compared to the ``top-down'' of $L^2(G)$), determines the 
accuracy with which we can implement a given unitary. Here we have 
prioritised the simplicity of the construction, rather than 
minimising its dimension or some other indicator of its complexity
for a given error. We leave it as an open problem whether more 
intricate constructions give a better accuracy for the same frame size.

\section*{Acknowledgements}
We thank Yakir Aharonov, Aram Harrow, Jonathan Oppenheim, Michalis Skontiniotis 
and Rob Spekkens for various discussions on reference frames over 
some time, which have helped us fix our own ideas.

This research was supported by Perimeter Institute for Theoretical Physics. 
Research at Perimeter Institute is supported by the Government of Canada 
through the Department of Innovation, Science and Economic Development Canada 
and by the Province of Ontario through the Ministry of Research, Innovation and Science. 
SP acknowledges support from the ERC Advanced Grant NLST and The Institute for Theoretical Studies, ETH Zurich.
AW acknowledges support by the Spanish MINECO (project FIS2016-86681-P) 
with the support of FEDER funds, and the Generalitat de Catalunya (project 2017-SGR-1127).


\appendix

\section{Partial trace relations involving $\text{SWAP}$} 
\label{ap:swap} 
Here we present two helpful relations involving the partial trace and the 
$\text{SWAP} = \sum_{i,j} \ket{ij}\!\bra{ji}$ operation.  
\begin{align} 
\tr{ \, A \otimes B \, \text{SWAP}}{B}  &= \sum_{ijk} \ket{i}_A\!\bra{ij} A \otimes B \, \text{SWAP}\ket{kj}\!\bra{k}_A \nonumber \\
&= \sum_{ijk} \ket{i}_A\!\bra{ij}  A \otimes B \ket{jk}\!\bra{k}_A \nonumber \\
& =  \sum_{ijk}  \proj{i} A \proj{j} B \proj{k} \nonumber \\
& = A B,
\end{align}
\begin{align} 
\tr{\text{SWAP} \, A \otimes B}{B}  &= \sum_{ijk} \ket{i}_A\!\bra{ij} \text{SWAP} A \otimes B \ket{kj}\!\bra{k}_A \nonumber \\
&= \sum_{ijk} \ket{i}_A\!\bra{ji}  A \otimes B \ket{kj}\!\bra{k}_A \nonumber \\
& =  \sum_{ijk}  \proj{i} B \proj{j} A \proj{k} \nonumber \\
& = B A.
\end{align}

\section{Bound on parameters $\alpha_k$} 
\label{ap:alpha_bound} 
Here we show that the parameters $\alpha_k$ appearing in $H= \sum_{k=1}^D \alpha_k \sigma_k$  (where $\| H\|\leq \pi$) are bounded by a constant. Consider the real vector space of Hermitian operators with Hibert-Schmidt inner product $\langle A,B \rangle= \trace{AB}$. Note that for any fixed operator basis $\{ \sigma_k \}_{k=0}^D$ (where for simplicity we have defined $\sigma_0 = \openone$), one can define a dual basis $\{ \tilde{\sigma}_l \}_{k=0}^D$ such that $\trace{\sigma_k \tilde{\sigma}_l } = \delta_{kl}$. It then follows that $\alpha_k = \trace{H \tilde{\sigma}_k}$ and hence from the Cauchy-Schwarz inequality that 
\begin{equation} 
|\alpha_k| \leq  \| H \|_2 \| \tilde{\sigma}_k \|_2 \leq \sqrt{D} \pi \| \tilde{\sigma}_k \|_2 
\end{equation} 
where  $\| H \|_2  = \sqrt{\trace{H^2}}$ is the Hilbert-Schmidt norm.  Hence 
\begin{equation} 
\alpha_{\max} = \sqrt{D} \pi \max_k \| \tilde{\sigma}_k \|_2  
\end{equation} 
provides an upper bound for all parameters $\alpha_k$.

\section{Overall error bound due to product of small rotations} 
\label{concatenation} 
Here we prove that performing a sequence of $N$ transformations involving the reference frame, each of which approximately implements $U_H = \exp\left( - i \frac{H}{N} \right)$ up to error $\cO \left( \frac{1}{N^2} \right)$, will implement the overall transformation $U_S =(U_H)^N$ up to error  $\cO \left( \frac{1}{N} \right)$. Let us assume that after $t$ steps, the system is in a state $\rho_S^t$ which satisfies:
\begin{align}
  \left\| \rho_S^t - (U_H)^t \, \rho_S  \, (U_H^{\dagger})^t \right\|_1 \leq\delta, 
\end{align}
where $\|\cdot\|_1$ denotes the trace norm. 
Hence, 
\begin{align*}
&\left\| \tr{V_{\text{seq}} \rho_S^t \otimes \sigma_R \,V_{\text{seq}}^\dagger}{R_1 \ldots R_D} - (U_H)^{t+1} \, \rho_S  \, (U^{\dagger}_H)^{t+1} \right\|_1 \\
&\phantom{=====}
 \leq \left\| U_H \rho_S^t U_H  + \cO\left( \frac{1}{N^2} \right) - (U_H)^{t+1} \, \rho_S  \, (U^{\dagger}_H)^{t+1} \right\|_1 \\
&\phantom{=====}
 \leq \left\| U_H \left( \rho_S^t - (U_H)^t \, \rho_S  \, (U^{\dagger}_H)^t \right) U^{\dagger}_H \right\|_1  + \cO \left( \frac{1}{N^2} \right)  
 \leq \delta + \cO \left( \frac{1}{N^2} \right) . 
\end{align*}
Since at $t=1$ eq.~\eqref{eq:nueqvoe} implies $\delta=\cO \left( \frac{1}{N^2} \right) $, after the sequence of $N$ applications of $V_{\text{seq}}$ the error bound is $\cO \left( \frac{1}{N} \right)$, and the unitary effectively applied on the system is $(U_H)^N = U_S$.

\section{Further details on the general $\cO\left( \frac{1}{N^2} \right)$ and $\cO\left( \frac{1}{N} \right)$ bounds}
\label{ap:On}

Here we provide more detailed technical proof of our  $\cO\left( \frac{1}{N^2} \right)$ and $\cO\left( \frac{1}{N} \right)$ bounds, showing  that they are indeed bounded in the required sense, and giving an explicit bound in each case showing their dependence on the various constants. In the first case, we wish to show that 
\begin{align} 
\left\| \tr{ V_{\alpha} \, \rho_S \otimes \sigma \, V_{\alpha}^{\dagger} }{R} - (\rho_S - i \frac{\alpha}{N}  [ \sigma, \rho_S ]) \right\|_1\leq \cO\left( \frac{1}{N^2} \right).
\end{align} 
where $V_{\alpha} =\exp \left( -i \frac{\alpha}{N} \text{SWAP} \right)$. Expanding  $V_{\alpha}$ and $V_{\alpha}^{\dagger}$  as  power series in $\frac{1}{N}$ and using \eqref{eq:effectiveU} we obtain  
\begin{align} \label{eq:singleclose} 
\left\|  \tr{ V_{\alpha} \, \rho_S \otimes \sigma \, V_{\alpha}^{\dagger} }{R} - (\rho_S - i \frac{\alpha}{N}  [ \sigma, \rho_S ]) \right\|_1  &= \left\| \sum_{n=2}^\infty  \sum_{k=0}^n \tr{ \frac{(-i \frac{\alpha}{N}\, \text{SWAP})^k}{k!} \rho_S \otimes \sigma  \frac{(i \frac{\alpha}{N} \, \text{SWAP})^{n-k}}{(n-k)!}}{R} \right\|_1 \nonumber \\
& \leq  \sum_{n=2}^\infty \left(  \frac{\alpha}{N}\right)^n \sum_{k=0}^n \frac{1}{k! (n-k)!} \, \left\| \text{SWAP}^k  \rho_S \otimes \sigma  \, \text{SWAP}^{n-k}  \right\|_1 \nonumber \\
& \leq  \sum_{n=2}^\infty \left(  \frac{\alpha}{N}\right)^n \sum_{k=0}^n \frac{1}{k! (n-k)!} \nonumber \\
& = \sum_{n=2}^{\infty} \frac{1}{n!} \left(  \frac{\alpha}{N}\right)^n \sum_{k=0}^n \frac{n!}{k! (n-k)!} \nonumber \\
& =  \sum_{n=2}^{\infty} \frac{1}{n!} \left(  \frac{\alpha}{N}\right)^n (2^n) \nonumber \\
& =  \sum_{n=2}^{\infty} \frac{1}{n!} \left(  \frac{2\alpha}{N}\right)^n  \nonumber \\
& = 4 \left(  \frac{\alpha}{N}\right)^2  \sum_{n=2}^{\infty} \frac{1}{n!} \left(  \frac{2\alpha}{N}\right)^{n-2}. 
\end{align} 
where we have used the fact that the trace norm satisfies $\| \tr{O_{AB}}{B} \|_1 \leq \| O_{AB} \|_1$  \cite{rastegin} and  $\| A\|_1 = \| U A V\|_1$  for unitary $U$ and $V$. For all $N \geq 2 \alpha$ we therefore have 
\begin{align} \label{eq:singleclose2} 
\left\|  \tr{ V_{\alpha} \, \rho_S \otimes \sigma \, V_{\alpha}^{\dagger} }{R} - (\rho_S - i \frac{\alpha}{N}  [ \sigma, \rho_S ]) \right\|_1 & \leq  4 \left(  \frac{\alpha}{N}\right)^2  \sum_{n=2}^{\infty} \frac{1}{n!} \nonumber \\
& =  4 (e-2) \left(  \frac{\alpha}{N}\right)^2 .
\end{align} 
Hence $\left\|  \tr{ V_{\alpha} \, \rho_S \otimes \sigma \, V_{\alpha}^{\dagger} }{R} - (\rho_S - i \frac{\alpha}{N}  [ \sigma, \rho_S ]) \right\|_1 = \cO\left( \frac{1}{N^2} \right)$. 
Note that this is not the tightest provable bound, and a higher threshold for $N$ can generate a  smaller (yet more complicated)  constant, but this is not the focus of this paper.

For the next stage of the proof, we must show that
\begin{align} 
\left\|  \exp\left( -i\frac{\alpha}{N} \sigma \right) \,\rho_S\, \exp \left( +i\frac{\alpha}{N} \sigma \right) - (\rho_S - i \frac{\alpha}{N}  [ \sigma, \rho_S ]) \right\|_1\leq \cO\left( \frac{1}{N^2} \right). 
\end{align} 
 Expanding the exponentials and proceeding almost identically to \eqref{eq:singleclose} and \eqref{eq:singleclose2}, for all $N \geq 2 \alpha$ we obtain 
\begin{align} \label{eq:singleclose3} 
\left\|  \exp\left( -i\frac{\alpha}{N} \sigma \right) \,\rho_S\, \exp \left( +i\frac{\alpha}{N} \sigma \right) - (\rho_S - i \frac{\alpha}{N}  [ \sigma, \rho_S ]) \right\|_1  &= \left\| \sum_{n=2}^\infty  \left(  \frac{\alpha}{N}\right)^n \sum_{k=0}^n \frac{(-i \sigma)^k}{k!} \rho_S \frac{(i \sigma)^{n-k}}{(n-k)!}  \right\|_1 \nonumber \\
& \leq  \sum_{n=2}^\infty \left(  \frac{\alpha}{N}\right)^n \sum_{k=0}^n \frac{1}{k! (n-k)!} \nonumber \\
& \leq  4 (e-2) \left(  \frac{\alpha}{N}\right)^2 .
\end{align} 
Hence, $\left\|  \exp\left( -i\frac{\alpha}{N} \sigma \right) \,\rho_S\, \exp \left( +i\frac{\alpha}{N} \sigma \right) - (\rho_S - i \frac{\alpha}{N}  [ \sigma, \rho_S ]) \right\|_1= \cO\left( \frac{1}{N^2} \right)$. Combining these two results using the triangle inequality, we obtain 
\begin{align}
\|\tr{V_{\alpha} \, \rho_S \otimes \sigma\, V_{\alpha}^\dagger }{B} - U_{\sigma}(\alpha) \, \rho_S  \, U_{\sigma}^{\dagger}(\alpha) \|_1 \leq  8(e-2) \left(  \frac{\alpha}{N}\right)^2 
\end{align}
for all  $N \geq 2 \alpha$. This is an explicit form of the bound given in \eqref{eq:epsfor1}.

We now consider the bounds for a general small rotation. First consider 
\begin{align} \label{eq:general_rot}
\eta = \left\| \tr{V_{\alpha_D}^{(D)} \ldots V_{\alpha_2}^{(2)} V_{\alpha_1}^{(1)} ( \rho_S \otimes \sigma_R) \, V_{\alpha_1}^{(1)\dagger}  V_{\alpha_2}^{(2)\dagger} \ldots  V_{\alpha_D}^{(D)\dagger}}{R_1 \ldots R_D} - ( \rho_S -i \tfrac{1}{N} \, [H, \rho_S] ) \right\|_1 .
\end{align} 
Expanding each of the unitaries as a power series in $\frac{1}{N}$ and collecting terms in $\left( \frac{1}{N}\right) ^n$ we note that the terms for $n=0$ and $n=1$ cancel (as in the proof in the main paper). The number of terms of order $n$ is equal to the number of ways of distributing $n$ indistinguishable balls in $(2D)$ distinguishable boxes (as we can choose which of the $2D$ unitaries $V^{(k)}_{\alpha_k}$ or $V_{\alpha_k}^{(k)\dagger}$   generates each power of $\left( \frac{1}{N}\right)$). For simplicity, we upper bound this by the number of ways of distributing $n$ distinguishable balls in $(2D)$ distinguishable boxes, which is $(2D)^n$. Each term also has a constant coefficient whose magnitude  is upper bounded by $\alpha_{\max}^n$ where  $\alpha_{\max} \geq |\alpha_k|$ for all $k$ and $U_S$. A derivation of the  constant   $\alpha_{\max}$ is given in appendix \ref{ap:alpha_bound}. 

Finally each term is associated with an operator which is the trace over the reference frame of some sequence of SWAP operations before and after $\rho_S \otimes \sigma_R$. Writing such a term as $\tr{\Pi_1 \, \rho_S \otimes \sigma_R \Pi_2}{R}$  where $\Pi_1$ and $\Pi_2$ are permutations, we see that 
\begin{equation} 
\| \tr{\Pi_1 \, \rho_S \otimes \sigma_R \Pi_2}{R} \|_1 \leq \| \Pi_1 \, \rho_S \otimes \sigma_R \Pi_2  \|_1 =  \| \rho_S \otimes \sigma_R \|_1 = 1
\end{equation} 
Combining all of these observations, we find that 
\begin{align}
\eta \leq  \sum_{n=2}^{\infty} \left( \frac{2 D \alpha_{\max}}{N}\right)^n.
\end{align}  
As long as $N \geq 4D \alpha_{\max}$, this gives 
\begin{equation}
\eta \leq \left( \frac{2 D \alpha_{\max}}{N}\right)^2 \left( 1 + \frac{1}{2} + \frac{1}{4}  + \ldots \right)  
     =    8 \left( \frac{ D \alpha_{\max}}{N}\right)^2.
\end{equation}  
Hence $\eta \leq \cO\left( \frac{1}{N^2} \right)$. Following an almost identical approach to \eqref{eq:singleclose}, \eqref{eq:singleclose2} and \eqref{eq:singleclose3}, one can also show that 
\begin{align} \label{eq:closeH}
\left\|  \exp\left( -i\frac{H}{N}  \right) \,\rho_S\, \exp \left( +i\frac{H}{N}  \right) - (\rho_S - i \frac{1}{N}  [ H, \rho_S ]) \right\|_1 & \leq 4(e-2) \left( \frac{\| H \|_1}{N}\right)^2  \nonumber \\
& \leq 4 (e-2) \left( \frac{\pi \sqrt{D+1} }{N}\right)^2,
\end{align} 
where in the last line, we have used the fact that the dimension of the system space is $\sqrt{D+1}$, and that we can choose $H$ such that it's maximum eigenvalue is $\pi$. Combining this result with the result for $\eta$ we obtain 
\begin{align} 
\left\| \tr{V_{\text{seq}}  \rho_S \otimes \sigma_R \, V_{\text{seq}}^\dagger}{R_1 \ldots R_D} - U_H \rho_S U^{\dagger}_H  \right\|_1 \leq \left(8 D^2 \alpha_{\max}^2 + 4 \pi^2 (e-2) (D+1) \right) \frac{1}{N^2} 
\end{align} 
for sufficiently large $N$.  Thus using the inductive argument in appendix \ref{concatenation} gives 
\begin{equation} 
\left\| \tr{V \, \rho_S \otimes \rho_R\, V^\dagger }{R} - U_S \rho_S U_S^\dagger \right\|_1 \leq \left(8 D^2 \alpha_{\max}^2 + (2 \pi)^2 (e-2) (D+1) \right) \frac{1}{N}. 
\end{equation} 

\end{document}